\newtheorem{theorem}{Theorem}
\newtheorem{lemma}{Lemma}
\newtheorem{proof}{Proof}
\def\ket#1{| #1 \rangle}
\def\bra#1{\langle #1 |}
\def\R{\mathbb{R}}
\def\O{\mathcal{O}}
\def\poly{\operatorname{poly}}
\definecolor{blue}{rgb}{0,0,1}
\definecolor{red}{rgb}{1,0,0}
\definecolor{green}{rgb}{0,1,0}
\begin{document}

%\preprint{APS/123-QED}

\title{Fixed-point quantum continuous search algorithm with optimal query complexity}% Force line breaks with \\

\author{Shan Jin}
\affiliation{Institute of  Fundamental and Frontier Sciences, University of Electronic Science and Technology of China, Chengdu, Sichuan, 610051, China}

\author{Yuhan Huang}
\affiliation{Institute of  Fundamental and Frontier Sciences, University of Electronic Science and Technology of China, Chengdu, Sichuan, 610051, China}
\affiliation{Department of Electronic and Computer Engineering, The Hong Kong University of Science and Technology, 999077, Hong Kong}

\author{Shaojun Wu}
\affiliation{Institute of  Fundamental and Frontier Sciences, University of Electronic Science and Technology of China, Chengdu, Sichuan, 610051, China}

\author{Guanyu Zhou}
\email{zhoug@uestc.edu.cn}
\affiliation{Institute of  Fundamental and Frontier Sciences, University of Electronic Science and Technology of China, Chengdu, Sichuan, 610051, China}

\author{Chang-Ling Zou}
\email{clzou321@ustc.edu.cn}
\affiliation{Key Laboratory of Quantum Information, CAS, University of Science and Technology of China, Hefei, China}
\affiliation{Hefei National Laboratory, Hefei 230088, China}

\author{Luyan Sun}
\email{luyansun@tsinghua.edu.cn}
\affiliation{Hefei National Laboratory, Hefei 230088, China}
\affiliation{Center for Quantum Information, Institute for Interdisciplinary Information Sciences, Tsinghua University, Beijing 100084, China}

\author{Xiaoting Wang}
\email{xiaoting@uestc.edu.cn}
\affiliation{Institute of  Fundamental and Frontier Sciences, University of Electronic Science and Technology of China, Chengdu, Sichuan, 610051, China}

\date{\today}% It is always \today, today,
             %  but any date may be explicitly specified

\begin{abstract}
Continuous search problems (CSPs), which involve finding solutions within a continuous domain, frequently arise in fields such as optimization, physics, and engineering. Unlike discrete search problems, CSPs require navigating an uncountably infinite space, presenting unique computational challenges. In this work, we propose a fixed-point quantum search algorithm that leverages continuous variables to address these challenges, achieving a quadratic speedup. Inspired by the discrete search results, we manage to establish a lower bound on the query complexity of arbitrary quantum search for CSPs, demonstrating the optimality of our approach. In addition, we demonstrate how to design the internal structure of the quantum search oracle for specific problems. Furthermore, we develop a general framework to apply this algorithm to a range of problem types, including optimization and eigenvalue problems involving continuous variables.
\end{abstract}

\maketitle

%\tableofcontents

\section{Introduction}

Continuous search problems (CSPs) focus on finding solutions within continuous domains, distinguishing them from discrete search problems that involve finite or countable sets. CSPs are common in fields where variables take values from uncountably infinite spaces, such as constrained optimization~\cite{constrained_opti}, nonconvex optimization~\cite{nonconvex-opti}, derivative-free optimization~\cite{deriv_free_opti}, energy landscape~\cite{energy_lands}, as well as continuous-variable (CV) spectral problem~\cite{quant_math_hall}. The challenges of CSPs arise from the need to explore vast, often high-dimensional continuous spaces efficiently, which introduces huge computational complexities not present in discrete problems.

With the advent of quantum computing, CSPs have gained new momentum due to the potential for quantum speedup. Quantum algorithms, such as the Deutsch-Jozsa algorithm~\cite{deutsch1985quantum, deutsch1992rapid}, Simon's algorithm~\cite{simon1997power}, Shor's algorithm~\cite{shor1994algorithms}, Grover's search algorithm~\cite{Grover:1996:FQM:237814.237866}, and the HHL algorithm \cite{harrow2009quantum}, have demonstrated the advantages of quantum computation over classical computation. In particular, Grover's quantum search algorithm offers a quadratic speedup compared to classical algorithms for solving a wide range of problems~\cite{Grover:1996:FQM:237814.237866,bennett1997strengths,brassard1997exact, brassard1998quantum,grover1998quantum,boyer1998tight,zalka1999grover,long1999phase,long2001grover,long2002phase,brassard2002quantum,grover2005fixed,yoder2014fixed, Liu2024, He2024}. Furthermore, the concept of the quantum search algorithm can be extended to the amplitude amplification algorithm, facilitating quantum state preparation~\cite{brassard1997exact, grover1998quantum,brassard2002quantum}. In spite of great success, due to the lack of precise determination of the number of marked states, the original Grover's quantum search cannot effectively avoid ``undercooking" or ``overcooking" the initial states \cite{brassard1997searching}. To address these issues, fixed-point quantum search algorithms have been proposed, which allow the quantum state to converge towards the target state without prior knowledge of the number of target items \cite{grover2005fixed, yoder2014fixed}. Several experimental platforms have been proposed for implementing quantum search algorithms, including cavity quantum electrodynamics (QED) \cite{yamaguchi2002quantum, deng2005simple, yang2007implementation}, trapped ions \cite{feng2001grover, brickman2005implementation, ivanov2008simple}, nuclear magnetic resonance (NMR) \cite{chuang1998experimental, vandersypen2000implementation}, and optics \cite{bhattacharya2002implementation}.

Most established quantum search algorithms primarily focus on discrete search problems. Directly applying discrete quantum search to solve CSPs poses challenges due to the necessity of discretization. CSPs involve searching for solutions within continuous domains in $\R^d$. The discretization process requires $(\frac{1}{\epsilon})^d$ points, where $\epsilon$ represents the precision of the error and depends on the specific search problem. Complex problems demand higher accuracy level, i.e., smaller values for $\epsilon$, thereby increasing the complexity. In contrast to discrete-variable (DV), e.g. qubit-based quantum computing models, CV quantum computing models~\cite{lloyd1999quantum} possess unique characteristics and they are ideal models to deal with CV-related problems~\cite{RevModPhys.84.621,RevModPhys.77.513,CVQI,PhysRevA.65.042304,PhysRevLett.101.130501,PhysRevA.83.052325,Su2013,PhysRevLett.97.110501,PhysRevA.76.032321,Loock:07,PhysRevA.79.062318,PhysRevLett.107.250501,PhysRevLett.98.070502,Liu2020,PhysRevLett.132.100801,PhysRevLett.132.040601}. In fact, a CV-based quantum search algorithm has been proposed to solve discrete search problems~\cite{pati2000quantum}, which is unfortunately not immediately applicable to CSPs.

To address this problem, in this work, we propose a CV fixed-point quantum search algorithm to solve CSPs while maintaining a quadratic speedup. Its fixed-point property ensures the convergence to the target solution without the issues of ``undercooking" or ``overcooking". Similar to the quadratic speedup proven to be optimal in discrete quantum search ~\cite{bennett1997strengths, boyer1998tight,zalka1999grover, Grover:2005:PQS:1073970.1073997,dohotaru2008exact}, we show that our algorithm achieves an optimal quadratic speedup for CSPs. Specifically, the complexity of our algorithm is $\mathcal{O}(\frac{1}{\sqrt{\lambda}})$, where $\lambda$ is the overlap between the target solution space and the search space. We prove that the lower bound on query complexity of an arbitrary query-based quantum algorithm to solving CSPs is $\frac{1}{2\sqrt2}[(1+\sqrt p -\sqrt {1-p})\sqrt{n}-2]$ queries, where $n\equiv \lceil \frac{1}{\lambda} \rceil$ and $p$ is the probability of success. This confirms the optimality of our algorithm for CSPs. We also present how to design the internal structure of oracles for the desired CSPs and provide a practical scheme for their physical construction. Finally, we demonstrate applications of our algorithm to optimization and eigenvalue problems involving CVs, illustrating its versatility and potential impact across various domains.

\begin{figure*}
\centering
\includegraphics[width=5.3in]{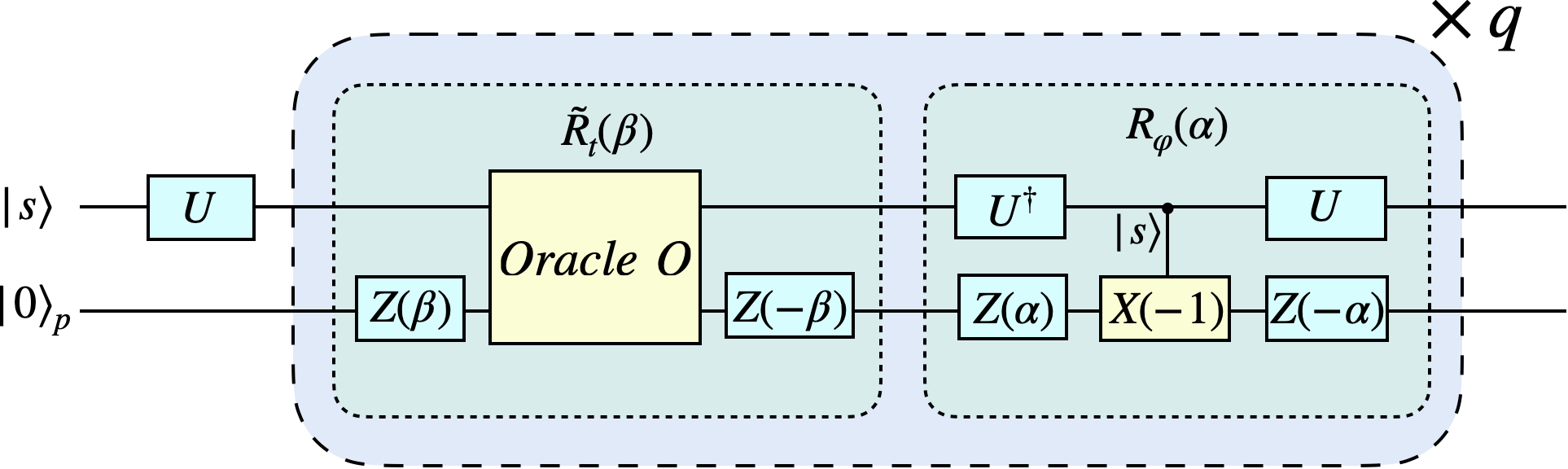}
\caption{The quantum circuit of the proposed quantum search algorithm for CSPs. The circuit demonstrates a sequence of $q$ Grover's iterations with each iteration $G(\alpha,\beta) \equiv R_\varphi(\alpha) R_t(\beta)$, and $Z(\alpha)\equiv e^{-i\alpha\hat x}$ and $X(\beta)\equiv e^{-i\beta\hat p}$ are defined as the parametrized translation operators generated by $\hat x$ and $\hat p$. The left part of the circuit implements $\tilde{R}_t(\beta)$, which is equal to $R_t(\beta)$ when restricted to the 2D subspace $\mathcal{T}$. The right part implements $R_{\varphi}(\alpha)$, where $\ket{\varphi}$ is the initial state with $\ket{\varphi}=U\ket{s}$. The unitary operation with $\ket{s}$ as the control is $|s\rangle\langle s|\otimes X(-1) + (I - |s\rangle\langle s|)\otimes I$, with $X(-1)=e^{i\hat{p}}$.
}\label{fig-circ-search}
\end{figure*}

\section{Fixed-point quantum continuous search design}\label{sec:search_design}

In a CSP, we assume the search space is a compact, measurable region $A\subset\R^d$. In addition, we assume that the solution to the CSP constitutes a measurable sub-region $Q\subset A$, and is unknown to us prior to search. Our objective is to identify and locate at least one point in $Q$. We assume the availability of an oracle $f$ which indicates whether a given input $x\in A$ belongs to $Q$ or not. Specifically, $f$ is an indicator function, $f: \bm{x}\in A \to \{0,1\}$ where $f(\bm x)=1$ if $\bm{x} \in Q$ and $f(\bm x)=0$ if $\bm{x} \in A/Q$. Letting $\lambda_0 \equiv \frac{m(Q)}{m(A)}$, where $m(\cdot)$ denotes the measure, the average query complexity of finding out a solution in an unstructured CSP through oracle evaluation is $\O(\frac{1}{\lambda_0})$. Our goal is to design a quantum continuous search algorithm that achieves a quadratic speedup, reducing the query complexity to $\O(\frac{1}{\sqrt{\lambda_0}})$. Specifically, in Sec.~\ref{sec_algorithm_design}, based on the unique features of CV quantum computing, we design the fundamental structure of the fixed-point quantum continuous search algorithm, including the choices of the initial and the target states, as well as the construction of the Grover's iteration, composed of two consecutive rotation gates restricted in a 2-dimensional subspace. In Sec.~\ref{sec-impl}, we focus on the specific design of the quantum continuous search algorithm for a given CSP, including the construction of the quantum oracle and the design of the associated quantum circuits.

\subsection{Algorithm Design}\label{sec_algorithm_design}

In this work, we design the circuit of the fixed-point quantum continuous search algorithm to solve CSPs, as illustrated in Fig.~\ref{fig-circ-search}. Let's start with the initial-state preparation. Recall that for a discrete search problem, the search space $A$ consists of $N$ elements, each encoded as one of the computational basis states $\{\ket{k}\}$, $k=0,\cdots,N-1$. Suppose there are $M$ solutions to the search problem, forming a subset $Q\subset A$. The target state is defined as $\ket{t}\equiv \frac{1}{\sqrt{M}}\sum_{x\in Q} \ket{x}$. In a quantum search algorithm, the process begins by preparing the quantum register into an initial state $\ket{\varphi_0}$, followed by applying a sequence of Grover's iterations to evolve the initial state into the final state $\ket{\varphi_{f}}$. The objective is to maximize the overlap between the final state $\ket{\varphi_{f}}$ and target state $\ket{t}$. However, since the target state $\ket{t}$ is unknown prior to the search, the optimal strategy is to initialize the state as an equal superposition of all potential solution candidates: $\ket{\varphi_0}\equiv \frac{1}{\sqrt{N}}\sum_{k=0}^{N-1}\ket{k}$. This ensures that all potential solutions are represented equally, providing a uniform starting point for the search process. Likewise, for quantum continuous search problems, the goal is to encode all points within the set $A$ into an equal or approximately equal superposition state:
\begin{align}\label{eq-ini-s}
\ket\varphi\equiv \int_{\R^d} \varphi (\bm x) \ket{\bm{x}} d\bm x= \int_A \varphi (\bm x) \ket{\bm{x}} d\bm x,
\end{align}
where $\varphi (\bm{x})$ is square-integrable and $\varphi (\bm{x})=0$ on $\R^d/A$. The choice of $\varphi (\bm{x})$ is flexible, with the only requirement that in $A$ there is no sub-region for $\varphi (\bm x)$ to have extremely low probability density. For a given CSP, we can choose the initial state to be an equal superposition state $\ket\varphi=\ket{\varphi_e}$ with $\varphi_e (\bm{x})= \frac{1}{\sqrt{m(A)}}$ on $A$ and zero elsewhere. In practice, the CV system can be realized with harmonic oscillators, and we can construct a superposition of Fock states $\ket{\varphi_s}=\sum_{k=1} a_k\ket{k}$ to closely approximate $\ket{\varphi_e}$, and the initial state $\ket\varphi$ can be efficiently prepared through an engineered universal unitary gate $U$ on an accessible source state $\ket{s}$ as $\ket\varphi=U\ket{s}$~\cite{Wang2019}, where $\ket{s}$ can be the vacuum state.

In the following, $Q$ is referred to as the search target space, gives the target state
\begin{align}
\left| t \right\rangle \equiv \int_{A} \varphi_1 (\bm{x}) \ket{\bm x}  d\bm{x}=\frac{1}{\sqrt{I_1}}\int_{Q} \varphi (\bm{x}) \ket{\bm x}  d\bm{x},
\end{align}
where $I_1=\int_{Q} |\varphi (\bm{x})|^2 d\bm{x}$ and $\varphi_1 (\bm{x})=\varphi (\bm{x})/\sqrt{I_1}$ on $Q$ and zero on $A/Q$. Intuitively, $\ket t$ is derived from $\ket\varphi$ and represents a superposition of all solution states supported within $A$. Although it remains unknown prior to the search, $\ket t$ is a crucial concept in constructing the Grover's iteration circuit. We also define the overlap between the two states:
\begin{align}
\lambda\equiv |\bra t \varphi\rangle|^2=\int_{Q} \left| \varphi(\bm{x}) \right|^2 d\bm{x}.
\end{align}
In CSPs, $\lambda$ represents the area ratio of the target region $Q$ to the entire search area $A$, which is an important parameter to quantify the query complexity of our quantum search algorithm. When $\ket{\varphi}$ is chosen to be the equal superposition state $\ket{\varphi_e}$, we have $\lambda=\lambda_0=\frac{m(Q)}{m(A)}$.

We also introduce the state $\ket{\bar t}$ orthogonal to $\ket{t}$:
\begin{align}
\ket{\bar t}   \equiv \int_{A} \varphi_{2}(\bm{x}) \left| \bm{x} \right\rangle d\bm{x}=\frac{1}{\sqrt{I_2}}\int_{A/Q} \varphi (\bm{x}) \ket{\bm x}  d\bm{x},
\end{align}
where $I_{2}=\int_{A/Q} |\varphi (\bm{x})|^2 d\bm{x}$ and $\varphi_{2} (\bm{x})=\varphi (\bm{x})/\sqrt{I_{2}}$ on $A/Q$ and zero on $Q$. Then corresponding overlap with  $\ket{\varphi}$ is $\bar{\lambda} \equiv |\bra{\bar t} \varphi\rangle|^2= \int_{A/Q} \left| \varphi(\bm{x}) \right|^2 d\bm{x}$. $\ket{t}$ and $\ket{\bar{t}}$ generate a two-dimensional subspace $\mathcal{T}$ in the original Hilbert space. We can express the initial state $\ket{\varphi}$ in terms of $\ket{t}$ and $\ket{\bar{t}}$:
\begin{align}\label{eq1}
\begin{split}
\ket{\varphi} = \sqrt{\bar{\lambda}} \ket{\bar{t}} + \sqrt{\lambda} \ket{t} \equiv \begin{pmatrix}
\sqrt{\bar{\lambda}} \\ \sqrt{\lambda}
\end{pmatrix}_\mathcal{T}.
\end{split}
\end{align}
Next, given the CSP and the classical oracle $f$, we can construct the corresponding quantum oracle $O$:
\begin{align} \label{eq-oracle}
O\ket{\bm{x}}\ket{y}=\ket{\bm{x}}\ket{y+f(\bm{x})}=
\begin{cases}
\ket{\bm{x}}\ket{y+1}, \ & \bm{x}\in Q \\
\ket{\bm{x}}\ket{y}, & \bm{x}\notin Q
\end{cases}.
\end{align}
Here, $\ket{y}$ represents the eigenstate of the position operator $\hat{x}$, with $\hat{x}\ket{y}=y\ket{y}$. The role of the quantum oracle $O$ is to determine whether the input $x$ belongs to $Q$ or not: if it does, the state of the auxiliary mode $\ket{y}$ is displaced by $1$; otherwise, it remains unchanged.

As illustrated in Fig.~\ref{fig-circ-search}, the quantum search circuit is constructed as a sequence of $q$ Grover's iterations:
\begin{align}
U^{(q)} = G(\alpha_q, \beta_q) \cdots G(\alpha_1, \beta_1).
\end{align}
Each iteration $G(\alpha,\beta)\equiv R_\varphi(\alpha) R_t(\beta)$ is composed by two parameterized inversion operators with respect to $\ket{\varphi}$ and $\ket{t}$, defined as:
\begin{subequations}
\begin{align}
R_\varphi(\alpha) &\equiv I -(1-e^{i\alpha})\ket{\varphi}\bra{\varphi},\\
R_t(\beta) &\equiv I-(1-e^{i\beta})\ket{t}\bra{t}.
\end{align}
\end{subequations}
Since $\ket{t}$ and $\ket{\varphi}$ both belong to the 2D subspace $\mathcal{T}$, $R_{\varphi}(\alpha)$ and $R_t(\beta)$ are reduced to $2$-by-$2$ operators on $\mathcal{T}$:
\begin{subequations}
\begin{align}
\begin{split}\label{eq2}
R_{\varphi}\left( \alpha \right) &= I - \left(1 - e^{i\alpha}\right) \ket{\varphi} \bra{\varphi} \\
&= I - (1 - e^{i\alpha}) \begin{pmatrix}
\bar{\lambda} & \sqrt{\lambda \bar{\lambda}} \\
\sqrt{\lambda \bar{\lambda}} & \lambda
\end{pmatrix}_\mathcal{T},
\end{split} \\
\begin{split}\label{eq3}
R_{t}\left( \beta \right) &= I - \left(1 - e^{i\beta}\right) \ket{t} \bra{t}
\equiv \begin{pmatrix}
1 & 0 \\
0 & e^{i\beta}
\end{pmatrix}_\mathcal{T}.
\end{split}
\end{align}
\end{subequations}
It is evident that all intermediate states $\ket{\varphi^{(l)}} = U^{(l)} \ket{\varphi}$ for $l=1,\ldots,q$, belong to $\mathcal{T}$.

There are various ways of designing the Grover's iteration $G(\alpha_j, \beta_j)$. A fixed-point quantum search satisfies that, by choosing appropriate values of $\alpha_j$ and $\beta_j$, $j=1,2,\cdots,q$, $U^{(q)}$ will drive the system state to converge towards the target state $\ket{t}$, where $q$ is the number of queries. When $\alpha_j=\beta_j=\pi$, $U^{(q)}$ reduces to the original non-fixed-point Grover search algorithm. In order to prevent the ``overcooking" or the ``undercooking" problems while still achieving the quadratic speedup, we adopt the approach in \cite{yoder2014fixed} to choose appropriate values of $\{\alpha_j, \beta_j\}$. The angle parameters in the quantum search algorithm need to satisfy the condition $\alpha_j = \beta_{q-j+1} = -2\cot^{-1}(\tan(2\pi j / L)\sqrt{1 - \eta^2})$, where $\eta^{-1} = T_{1/L}(1/\sqrt{\delta})$, $L = 2q + 1$ with $q$ denoting the number of query iterations. Here, $T_L(x) = \cos[L\cos^{-1}(x)]$ represents the $L^{\text{th}}$ Chebyshev polynomial of the first kind and $\delta$ corresponds to the desired accuracy for obtaining search results. Based on Eqs.~\eqref{eq1}, \eqref{eq2}, and \eqref{eq3}, it can be concluded that implementing a CV-based quantum search algorithm is feasible within subspace $\mathcal{T}$. Given an error threshold $\delta$, from $|\langle t|U^{(q)}|\varphi\rangle|^2>1- \delta$, we obtain the number of queries required for achieving the target state within $\delta$:
\begin{align}\label{eq4}
q \ge \frac{1}{2}\Big(\frac{\log(2/\sqrt{\delta})}{\sqrt{\lambda}} - 1\Big).
\end{align}
In other words, the query complexity of our fixed-point quantum search algorithm is $\mathcal{O}(\frac{1}{\sqrt{\lambda}})$, which offers a quadratic quantum speedup compared to classical query-based algorithms in solving CSPs.

We emphasize that the choice of parameters $(\alpha_j,\beta_j)$ is crucially important in the algorithm design to gain quantum speedup. In Appendix \ref{sec-pi3-search}, we have included an adaptation of Grover's original $\pi/3$ fixed-point search algorithm to solve CSPs, whose query complexity is only $\mathcal{O}(1/\lambda)$, with no quantum advantage.

\begin{figure}
\centering
\includegraphics[width=3.4in]{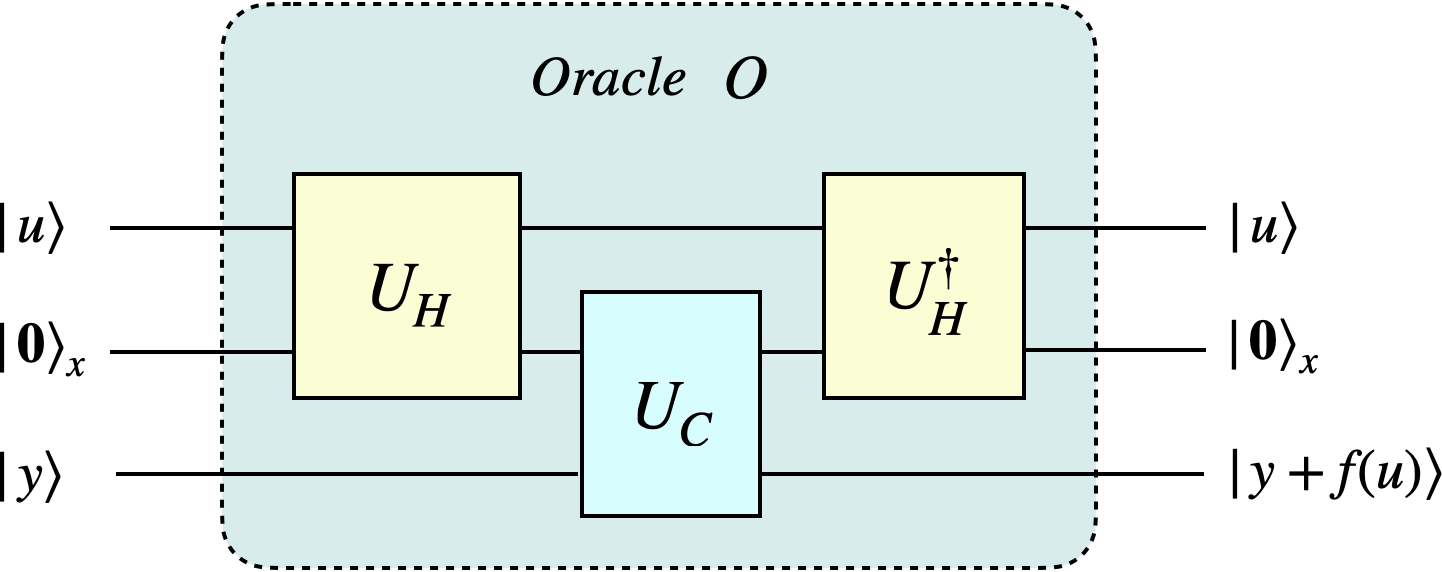}
\caption{For a given CSP, the internal structure of the quantum oracle can be constructed as: $O=U_{H}^\dag U_CU_{H}$, where $U_{H}=e^{-i\sum_{j=1}^k  \hat{H}_j\otimes \hat{p}_j}$, $U_C = P_C \otimes e^{-i\hat{p}} + (I - P_C) \otimes I$, and $P_C = \int_{\bm{z}\in C} \ket{\bm{z}}\bra{\bm{z}}d\bm{z}$. The first mode encodes the input state, and the second and the third modes are auxiliary modes. The states $\ket{\bm 0}_x$ and $\ket{y}$ correspond to the eigenstates of the position operators associated with the eigenvalues $\bm 0$ and $y$ respectively.
}\label{fig-oracle}
\end{figure}

%$\ket{\bm 0}_x$ and $\ket{y}$ represent the eigenstates corresponding to the eigenvalues $\bm 0$ and $\bm y$ for the position operator.

\subsection{Circuit Implementation}\label{sec-impl}

In order to implement our algorithm on a CV quantum device, we need to figure out how to construct the Grover's iteration $G(\alpha,\beta)=R_\varphi(\alpha)R_t(\beta)$ using basic building blocks in CV quantum computers. Unlike $R_\varphi(\alpha)$, the physical construction of $R_t(\beta)$ is not straightforward due to the unknown nature of $\ket{t}$ prior to the search. However, we can construct $\tilde{R}_t(\beta)\equiv Z(-\beta)OZ(\beta)$, which is equal to $R_t(\beta)$ when restricted to the 2D subspace $\mathcal{T}$ as follows:
\begin{align}
\begin{split}
\tilde{R}_t(\beta)\ket{\bm{x}}\ket{0}_p &\equiv Z(-\beta)OZ(\beta)\ket{\bm{x}}\ket{0}_p \\
&=\begin{cases}
e^{i\beta}\ket{\bm{x}}\ket{0}_p, \ &\bm{x}\in Q \\
\ket{\bm{x}}\ket{0}_p, &\bm{x}\notin Q
\end{cases},
\end{split}
\end{align}
where $Z(\alpha) = e^{-i\alpha \hat{x}}$, $X(\beta) = e^{-i\beta \hat{p}}$, and $\ket{0}_p$ denotes the eigenstate of the momentum operator $\hat{p}$ with $p = 0$. Thus, when restricted to the 2D subspace $\mathcal{T}$, $\tilde{R}_t(\beta)$ is the same as $R_t(\beta)$. It is worthwhile to note that, during each iteration $G(\alpha,\beta)$ in Fig.~\ref{fig-circ-search}, only a single query to $O$ is required, which distinguishes it from the previous fixed-point quantum search algorithms where two queries per iteration are required~\cite{yoder2014fixed}.

Next, we discuss how to construct the quantum oracle $O$ for a given CSP. In query-based search problems, an oracle $f(\bm{x}): \bm{x} \in A\subset \mathbb{R}^d \to \{0,1\}$ is an indicator function to determine whether the input item $\bm{x}$ belongs to the target space $Q$. Without loss of generality, if $\bm{x} \in Q$, then $f(\bm{x}) = 1$; otherwise, $f(\bm{x}) = 0$. Currently, there exists an abstract form of the quantum oracle based on this indicator function $f(\bm{x})$, lacking a specific structure.

In this study, we investigate the internal structure of quantum oracles and propose a general framework applicable to solving optimization problems and spectral problems, as depicted in Fig.~\ref{fig-oracle}. For search problems, determining whether an input $\bm{x}$ belongs to $Q$ relies on analyzing a $k$-dimensional feature function $g(\bm{x})$. Such $g(\bm{x})$ encapsulates the $k$-dimensional features of $x$, and is defined and determined by the given CSP. Then we can express the classical oracle function $f(\bm{x})$ based on the feature function $g(\bm{x})$:
\begin{align}
f(\bm{x})=
\begin{cases}
1, & \text{if } g(\bm{x}) \in C\subset \R^k \\
0, & \text{if } g(\bm{x}) \notin C
\end{cases},
\end{align}
where the search criterion is to check whether $g(\bm{x})\in C$ and the target space is $Q = \{\bm{x} \mid g(\bm{x}) \in C\subset \R^k\}$. Note that the dimension $k$ of $g(\bm{x})$ is not necessarily the same as that of $\bm{x}$, since $g(\bm{x})$ only represents certain features of $\bm{x}$, not $\bm{x}$ itself. For example, if one wants to search for the tallest girl in Class B, then for a candidate such as \emph{Alice}, her feature function $g=[1.5,1]$ is two-dimensional, where the first entry represents her \emph{height} in meters and the second indicates her \emph{gender} (where $0$ represents male and $1$ for female). However, if one wants to search for the tallest student in Class B, then Alice's feature function simplifies to $g=1.5$, which is one-dimensional.

Consequently, the quantum oracle $O$ can be constructed directly from the feature function $g(\bm{x})$, without using the information from $f(\bm{x})$. Specifically, assuming $g(\bm{x})$ is $k$-dimensional, $O$ can be constructed as:
\begin{align}
O=U_{H}^\dag U_CU_{H},
\end{align}
where
\begin{align*}
U_{H}&=e^{-i\sum_{j=1}^k  \hat{H}_j\otimes \hat{p}_j},\\
U_C&= P_C\otimes e^{-i\hat{p}}+ (I-P_C) \otimes I,\quad P_C = \int_{\bm{z}\in C} \ket{\bm{z}}\bra{\bm{z}}d\bm{z},
\end{align*}
and $\{\hat{H}_j\}_{j=1}^k$ is determined by the given CSP. When $g(\bm{x})$ is one-dimensional, $U_{H}$ is reduced to $U_{H}=e^{-i \hat{H}\otimes \hat{p}}$. Note that the decision criterion for determining whether $g(\bm{x})\in C$ can be tailored to suit different CSPs, making this oracle construction flexible and adaptable to a wide range of problems~\cite{pati2000quantum,rebentrost2018photonic}.

Now we discuss how to choose $\{\hat{H}_j\}_{j=1}^k$ in $U_H$ and the input state $\ket{u}$ for the given CSP. In the application of finding the spectral of an operator $\hat{B}=\poly(\hat x,\hat p)$, where $\hat{B}$ is a polynomial in $\hat x$ and $\hat p$ of a single mode, the corresponding feature function $g$ is one-dimensional, and we can choose $\hat{H}\equiv \hat{B}$ and $U_{H}=e^{-i\hat{B}\otimes \hat{p}}$ in the oracle construction in Fig.~\ref{fig-oracle}. For this application, $\ket{u}$ denotes one eigenstate of $\hat{B}$ and the initial state can be written as a superposition of all eigenstates of $\hat{B}$, with details in Sec.~\ref{sec-eigen}. In fact, $U_H=e^{-i\hat{B}\otimes \hat{p}}$ in Fig.~\ref{fig-oracle} performs the continuous-version phase estimation algorithm with CV, where $g(\ket u)$ represents the corresponding eigenvalue of $\ket{u}$. Additionally, the second mode in Fig.~\ref{fig-oracle} serves as an auxiliary mode for implementing oracle operations and can be reused. In general, if the input state $\ket{u}=\ket{\bm{x}}$ in Fig.~\ref{fig-oracle}, then the effect of $O$ is described as follows:
\begin{align}
O\ket{\bm{x}}\ket{\bm{0}}_x\ket{y}=\ket{\bm{x}}\ket{\bm{0}}_x\ket{y+f(\bm{x})}.
\end{align}
After applying the oracle $O$, the feature information $g(\bm{x})$ will be extracted into the second register. For example, in the optimization problem in Sec.~\ref{sec-opt}, the role of the oracle is to extract the gradient information into the second register.

To better illustrate the effect of $g(\bm{x})$, we examine a CSP where $g(\bm{x})$ is one-dimensional. In this scenario, the indicator function $f(\bm{x})$ associated with the search problem acts as a binary classifier (e.g., a support vector machine or a neural network). Given an input item $\bm{x}$, its corresponding output value $g(\bm{x})$ is obtained by training a classifier for a binary classification problem~\cite{PhysRevResearch.1.033063}. The classification is then carried out based on the predefined criteria (e.g., whether $g(\bm{x}) \ge 0.5$ or not).

To summarize, the adjustable oracle construction $O=U_{H}^\dag U_CU_{H}$ developed here is crucially important for designing quantum search circuits to solve continuous optimization and spectral problems. Notably, our search algorithm can be generalized into an amplitude amplification method, enabling us to prepare quantum states with specific properties. For these applications, the key part is the design of the oracle $O$, which will be illustrated in detail in Sec.~\ref{sec-opt} and Sec.~\ref{sec-eigen}.

\section{Lower bound of query complexity for solving CSPs}

The optimality of the quantum search algorithm with a quadratic speedup for discrete search problems can be proven using quantum oracles~\cite{bennett1997strengths,boyer1998tight,zalka1999grover, Grover:2005:PQS:1073970.1073997,dohotaru2008exact}. However, for CSPs, it remains unclear what is the lower-bound of the query complexity for an arbitrary quantum search. Specifically, given a CSP with search space $A$, solution space $Q$, and the overlap $\lambda=\lambda_0=\frac{M(Q)}{M(A)}$. We say a quantum search algorithm solves the CSP with probability $p$, if after $q$ oracle queries, the overlap between the final state and the target state is at least $p$. Then we have the following theorem:

\begin{theorem}[Lower bound for solving CSPs]\label{t1}

If an arbitrary quantum search algorithm solves a given CSP with success probability $p$ after $q$ queries, then:
\begin{align}
q\ge \frac{1}{2\sqrt2}[(1+\sqrt{p} -\sqrt{1-p})\sqrt{n}-2],
\end{align}
where $n=\lceil \frac{1}{\lambda} \rceil$.
\end{theorem}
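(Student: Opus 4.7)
My plan is to adapt the Bennett-Bernstein-Brassard-Vazirani hybrid argument to the continuous setting and apply it to a family of essentially disjoint hard instances. First, I partition the search space $A$ into $n = \lceil 1/\lambda \rceil$ disjoint measurable subregions $A_1,\dots,A_n$ of equal measure, and for each $i$ I build a CSP instance whose target region $Q_i$ is essentially concentrated on $A_i$ with $m(Q_i) \approx \lambda\, m(A)$; the resulting target states $|t_i\rangle$ are (essentially) pairwise orthogonal, and by hypothesis the algorithm must solve each such instance with success probability at least $p$. Let $|\phi_q^i\rangle$ denote the algorithm's output when the oracle is $O_i$, and $|\phi_q\rangle$ its output with the trivial null oracle $O_\emptyset$ (same user-side unitaries throughout).

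The heart of the argument sandwiches the sum $\sum_{i=1}^n \| |\phi_q^i\rangle - |\phi_q\rangle \|$ between matching upper and lower bounds. For the upper bound, the standard one-step hybrid induction gives $\||\phi_q^i\rangle - |\phi_q\rangle\| \le 2 \sum_{l=0}^{q-1} \|\Pi_{A_i} |\phi_l\rangle\|$, since $O_i - O_\emptyset = \Pi_{A_i} \otimes (X(1) - I)$ acts only on position support in $A_i$ with auxiliary-side operator of norm $2$. Summing over $i$ and applying Cauchy-Schwarz with $\sum_i \|\Pi_{A_i}|\phi_l\rangle\|^2 \le 1$ yields $\sum_i \||\phi_q^i\rangle - |\phi_q\rangle\| \le 2q\sqrt n$. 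For the lower bound, I exploit the subspace-splitting inequality $\|v\| \ge (\|Pv\| + \|P^\perp v\|)/\sqrt 2$ with $P_i = |t_i\rangle\langle t_i|$, together with $|\langle t_i|\phi_q^i\rangle| \ge \sqrt p$ (so $\|P_i^\perp |\phi_q^i\rangle\| \le \sqrt{1-p}$) and $\alpha_i := \langle t_i|\phi_q\rangle$ satisfying $\sum_i|\alpha_i|^2 \le 1$. After optimizing relative phases one obtains
\begin{equation*}
\||\phi_q^i\rangle - |\phi_q\rangle\| \ge \frac{1}{\sqrt 2}\big[(\sqrt p - |\alpha_i|) + (\sqrt{1 - |\alpha_i|^2} - \sqrt{1-p})\big]
\end{equation*}
as a valid (possibly vacuous) lower bound for every $i$. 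Summing, then using the elementary inequalities $\sqrt{1 - |\alpha_i|^2} \ge 1 - |\alpha_i|$ and the Cauchy-Schwarz bound $\sum_i |\alpha_i| \le \sqrt n$, produces $\sum_i \||\phi_q^i\rangle - |\phi_q\rangle\| \ge \frac{1}{\sqrt 2}[n(1 + \sqrt p - \sqrt{1-p}) - 2\sqrt n]$. Combining with the upper bound and rearranging gives exactly $q \ge \frac{1}{2\sqrt 2}[(1 + \sqrt p - \sqrt{1-p})\sqrt n - 2]$.

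The principal obstacle is the choice of lower-bound inequality: the naive estimate $\|v\| \ge |\langle t_i|v\rangle|$ would only yield a $\sqrt p$ coefficient, whereas the subspace splitting is precisely what produces the $(1 + \sqrt p - \sqrt{1-p})$ factor in the theorem. A secondary subtlety is the packing of the $Q_i$'s: since $n = \lceil 1/\lambda \rceil$ regions of measure $\lambda\,m(A)$ cannot be fully disjoint, I take almost-disjoint $Q_i$'s whose total pairwise overlap has measure $O(m(A)/n)$, introducing only an $O(1)$ correction to the sum bounds. The continuous nature of the problem itself introduces no further difficulty, because $\Pi_{A_i}$ and the oracle identity $O_i - O_\emptyset = \Pi_{A_i}\otimes(X(1) - I)$ are well-defined on the square-integrable sector even though the position eigenstates $|\bm x\rangle$ themselves are improper.
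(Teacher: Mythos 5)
Your proposal is correct and follows essentially the same route as the paper's Appendix B: partition $A$ into $n=\lceil 1/\lambda\rceil$ subregions, bound $\sum_i\||\phi_q^i\rangle-|\phi_q\rangle\|$ above by $2q\sqrt{n}$ via the hybrid argument plus Cauchy--Schwarz, and below by $\tfrac{1}{\sqrt2}[n(1+\sqrt{p}-\sqrt{1-p})-2\sqrt{n}]$ via the same subspace-splitting inequality. The only cosmetic differences are that you work with sums rather than averages, use the rank-one projector $|t_i\rangle\langle t_i|$ where the paper uses the position projector $P_{Q_i}$ (both give $\|P|\phi_q^i\rangle\|\ge\sqrt{p}$), and you are more careful than the paper about the near-disjoint packing of the $Q_i$.
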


Theorem~\ref{t1} states that, for CSPs, the lower bound of query complexity among all query-based quantum search algorithms is $\mathcal{O}(1/\sqrt{\lambda})$, implying that our quantum search algorithm for CSPs is optimal in terms of query complexity.

\section{Application to optimization problems}\label{sec-opt}

\begin{figure}
\centering
\includegraphics[width=1\columnwidth]{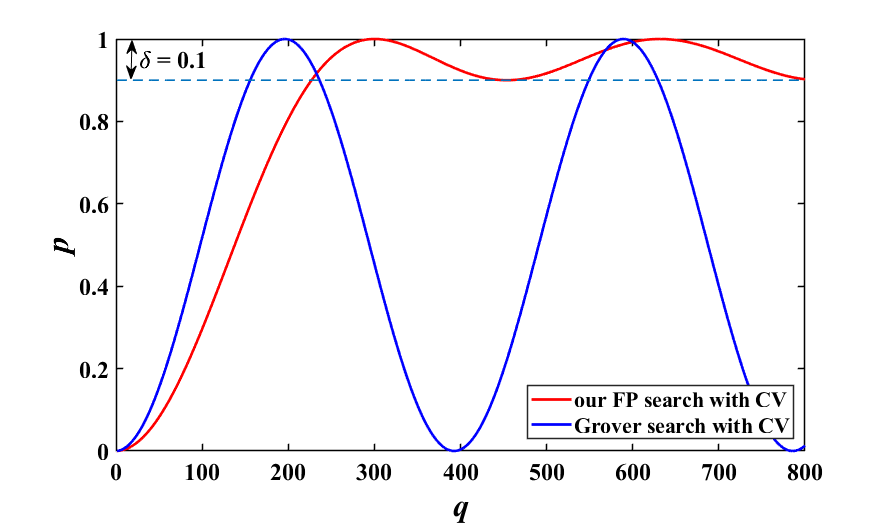}
\caption{The simulation of the quantum search algorithm to solve the optimization problem. $q$ is the number of iterations and $p$ is the probability that the searched quantum state is projected to the target state. As illustrated, the probability $p$ first increases with the number of iterations, and then oscillates in the interval where the accuracy is greater than 0.9 ($\delta = 0.1$).}\label{fig-simulation}
\end{figure}

\begin{table*}[htbp]
\centering
\setlength{\tabcolsep}{1pt}
\begin{adjustbox}{width=\textwidth}
\begin{tabular}{>{\centering\arraybackslash}p{3cm}
                @{\hspace{0.01cm}}
                >{\centering\arraybackslash}p{6cm}
                @{\hspace{0.01cm}}
                >{\centering\arraybackslash}p{4cm}
                @{\hspace{0.01cm}}
                >{\centering\arraybackslash}p{3.5cm}
                @{\hspace{0.01cm}}
                >{\centering\arraybackslash}p{3.5cm}}
\hline
\textbf{Function Name} & \textbf{Function Form} & \textbf{Search Space} & \textbf{Quantum} \newline \textbf{Required Iterations} & \textbf{Classical} \newline \textbf{Expected Iterations} \\
\hline
rastrigin~\cite{rastrigin1974systems} & $\sum_{i=1}^{2} \left(x_i^2 - 10 \cos(2 \pi x_i)\right)$ & $-2 \le x_i \le 2$ & 353 & $1.3872 \times 10^5$ \\
\hline
styblinski tang~\cite{styblinski1990experiments} & $\frac{1}{2} \sum_{i=1}^{2} \left(x_i^4 - 16x_i^2 + 5x_i\right)$ & $-2 \le x_i \le 2$ & 147 & $2.406 \times 10^4$ \\
\hline
alpine02~\cite{Al-Roomi2015} & $-\prod_{i=1}^{2} \left(\sqrt{x_i} \sin(x_i)\right)$ & $0 \le x_i \le 10$ & 15 & 237 \\
\hline
Himmelblan~\cite{himmelblau1972applied} & $(x_1 + x_2 - 11)^2 + (x_1 + x_2^2 - 7)^2$ & $-2 \le x_i \le 2$ & 256 & $7.272 \times 10^4$ \\
\hline
Rosenbrock~\cite{10.1093/comjnl/3.3.175} & $(1 - x_1)^2 + 100(x_2 - x_1^2)^2$ \newline s.t. \ $x_1^2 + x_2^2 \le 2$ & $x_1^2 + x_2^2 \le 2$ & 237 & $6.262 \times 10^4$ \\
\hline
Gomez and Levy~\cite{levy1985tunneling} & $4x_1^2 - 2.1x_1^4 + \frac{1}{3}x_1^6 + x_1x_2 - 4x_2^2 + 4x_2^4$ \newline s.t. \  $-\sin(4\pi x_1)+2\sin^2(2\pi x_2)\le 1.5$ & $-1 \le x_1 \le 0.75$ \newline $-1\le x_2\le 1$ & 58 & 3786 \\
\hline
\end{tabular}
\end{adjustbox}

\caption{Simulation results of applying our algorithm to a group of test functions in optimization. We use the fixed-point quantum search method with CV to find stationary points, where the partial derivatives do not exceed 0.1. Given a projection probability greater than 0.9, the results of our quantum algorithm show significant advantages over classical search methods.}
\label{table1}
\end{table*}

One crucial application of quantum search is that it can be applied to solve optimization problems~\cite{durr1999quantumalgorithmfindingminimum,baritompa2005grover,conti_opti_morimoto}. Compared to established quantum search algorithms based on DV models, one benefit of our search algorithm for continuous optimization problems lies in that there is no need for discretization, and no need to increase the number of qubits to gain better discretization accuracy. Hence, our proposed algorithm offers a robust solution to a wide range of problems in constrained optimization~\cite{constrained_opti}, nonconvex optimization~\cite{nonconvex-opti}, derivative-free optimization~\cite{deriv_free_opti}, and energy landscape~\cite{energy_lands}. Another benefit is that our algorithm is able to identify all optimal and near-optimal solutions within a given search space; in comparison, gradient-based algorithms, such as the gradient descent or the Newton method only converge to one local optimal solution from an initial trial solution.

In the following, we focus on the constrained optimization problems to demonstrate our algorithm. Specifically, a general constrained optimization problem is formulated as follows: we aim to find the optimal values of an objective function $h(\bm{x}): \mathbb{R}^d \to \mathbb{R}$ while satisfying a set of constraints. Without loss of generality, we denote the region that satisfies all the constraints as $A\subset \R^d$. Then all optimal and near-optimal solutions $\bm x \in A$ must satisfy $\left\|\nabla h(\bm{x})\right\|_\infty \le \epsilon$, for a given small $\epsilon$. Hence, we can construct the following oracle function $f$ to characterize the optimization solutions:
\begin{align*}
f(\bm{x}) =
\begin{cases}
1, & \left\|\nabla h(\bm{x})\right\|_\infty \le \epsilon\text{ and } x\in A \\
0, & \left\|\nabla h(\bm{x})\right\|_\infty > \epsilon\text{ or } x\notin A
\end{cases},
\end{align*}
where $f(\bm{x})$ determines whether the gradient magnitude at $\bm{x}$ is within the acceptable range defined by $\epsilon$.

Next, we follow our algorithm to prepare the initial state $\ket\varphi$ and define the target space $Q$ and the target solution $\ket t$. Then we construct the quantum search circuit shown in Fig.~\ref{fig-circ-search}. Below, we discuss how to design the oracle $O$ in Fig.~\ref{fig-oracle} for this optimization problem. We rewrite the unitary $e^{-i\hat{H}\otimes \hat{p}}$ as $e^{-i\sum_{j=1}^d \partial_j h(\bm{\hat{x})}\otimes \hat{p}_{d+j}}$, since we need information about each component of the gradient. Then the criterion for $U_C$ can be modified to require that the partial derivative of each component $|\partial_j h(x)|$ is less than $\epsilon$.

Assuming that the input state of the first mode is $\ket{\varphi} = \int_A \varphi(\bm{x}) \ket{\bm{x}} d\bm{x}$, the evolution is as follows:
\begin{align}
\begin{split}
O \ket{\varphi} \ket{\bm{0}}_x \ket{y}
=& \int_{Q} \varphi(\bm{x}) \ket{\bm{x}} d\bm{x} \otimes \ket{\bm{0}}_x \otimes \ket{y+1} \\
&+ \int_{A/Q} \varphi(\bm{x}) \ket{\bm{x}} d\bm{x} \otimes \ket{\bm{0}}_x \otimes \ket{y},
\end{split}
\end{align}
where the second register is $d$ auxiliary modes $\ket{0}_x^{\otimes d}$ and can be reused. Ignoring the second register, it can be seen that the oracle based on the indicator function $f(\bm{x})$ is implemented as $O \ket{\bm{x}} \ket{y} = \ket{\bm{x}} \ket{y + f(\bm{x})}$.

To demonstrate the effectiveness of our algorithm, we apply it to minimize the Rosenbrock function, which is a constrained non-convex optimization problem~\cite{10.1093/comjnl/3.3.175}:
\begin{align}
\begin{split}
\min\ &h(x_1,x_2)=(1-x_1)^2 + 100(x_2-x_1^2)^2,\\
s.t.  \ &x_1^2+x_2^2\le 2.
\end{split}
\end{align}
Here, the search space is $A=\{(x_1,x_2)|x_1^2+x_2^2\le 2\}$, and we aim to search for the target space $Q=\{(x_1,x_2)| |\partial_j h(x_1,x_2)| \le 0.1\}$ in the search space. In this search problem, the initial state is $\ket{\varphi} = U\ket{s} = \frac{1}{\sqrt{2\pi}}\int_A \ket{\bm{x}} \, d\bm{x}$, and the target state is $\ket{t} = \frac{1}{\sqrt{|Q|}} \int_Q \ket{\bm{x}} \, d\bm{x}$. We convert this problem into a 2D subspace spanned by $\ket{t}$ and $\ket{\bar{t}}$, where $\ket{\bar{t}}$ represents a non-target solution. The simulation results in Figure~\ref{fig-simulation} demonstrate that the red line (our fixed-point quantum search with CV) oscillates within the required accuracy range $1-\delta$ as the number of iterations increases beyond a certain value, where we choose $\delta=0.1$. In contrast, the blue line (the initial version of the quantum search) exhibits significant oscillations, resulting in both ``overcooking" and ``undercooking". This indicates that our proposed fixed-point quantum search algorithm with CV can effectively solve CSPs.

Besides the Rosenbrock function, we also apply our algorithm to solve five other typical test functions in optimization, and some of these functions are challenging to optimize due to the presence of a large number of local minima. The simulation results are summarized in Table~\ref{table1}. For each of the six test functions, we apply our fixed-point quantum continuous search algorithm to look for stationary points where the gradient magnitude becomes vanishingly small. Specifically, the search criterion is set such that the norm of the gradient is no greater than $0.1$ with a required success probability of $p>0.9$. By simulating the quantum search algorithm on a classical CPU server, we find the minimum number of iterations to ensure the success probability exceeds $0.9$ for each optimization problem. For comparison, Table~\ref{table1} records both the number of iterations required by our quantum algorithm and the iterations needed for a direct classical search. The results clearly demonstrate that the quantum search requires significantly fewer iterations than the classical counterpart for each optimization problem, showcasing a distinct quadratic advantage.

It is worthwhile to note that like all other quantum search algorithms, our quantum search algorithm for CSPs works well in theory. In practice, the success of our algorithm relies on the successful implementations of quantum error correction on noisy quantum devices~\cite{Cai2021,Joshi_2021}. Without error correction, the circuit of our algorithm can be very sensitive to noise. In order to visualize how noise could accumulate and damage the quantum coherence of our algorithm, we add depolarized noise to the quantum circuit in Fig.~\ref{fig-circ-search}. When restricting to the two-dimensional subspace spanned by $\{\ket{t},\ket{\bar t}\}$, the noise dynamics of the system state $\rho$ after each iteration is approximately governed by:
\begin{align}
\label{eqn:depolar}
\epsilon(\rho)= (1-\eta)\rho+ \frac{\eta}{2} I,
\end{align}
where the depolarizing probability $\eta \in[0,1]$ characterizes the noise intensity. In Fig.~\ref{fig-simulation-noise}, we perform the simulation under the dynamics of Eq.~(\ref{eqn:depolar}) to optimize the `alpine02' function in Table~\ref{table1} under various noise intensities $\eta$ ranging from 0 to 0.03. At depolarizing noise levels of $\eta=0.005$ or lower, the algorithm still finds the target state with higher than $90\%$ probability after $17$ iterations, demonstrating reasonable noise resilience. When $\eta$ gets larger than 0.02, the final output state of our algorithm will significantly deviate from the target state. So while low noise levels are tolerated, success probability degrades significantly as depolarizing noise increases, highlighting the importance of error correction for real-world implementation.

\begin{figure}
\centering
\includegraphics[width=1\columnwidth]{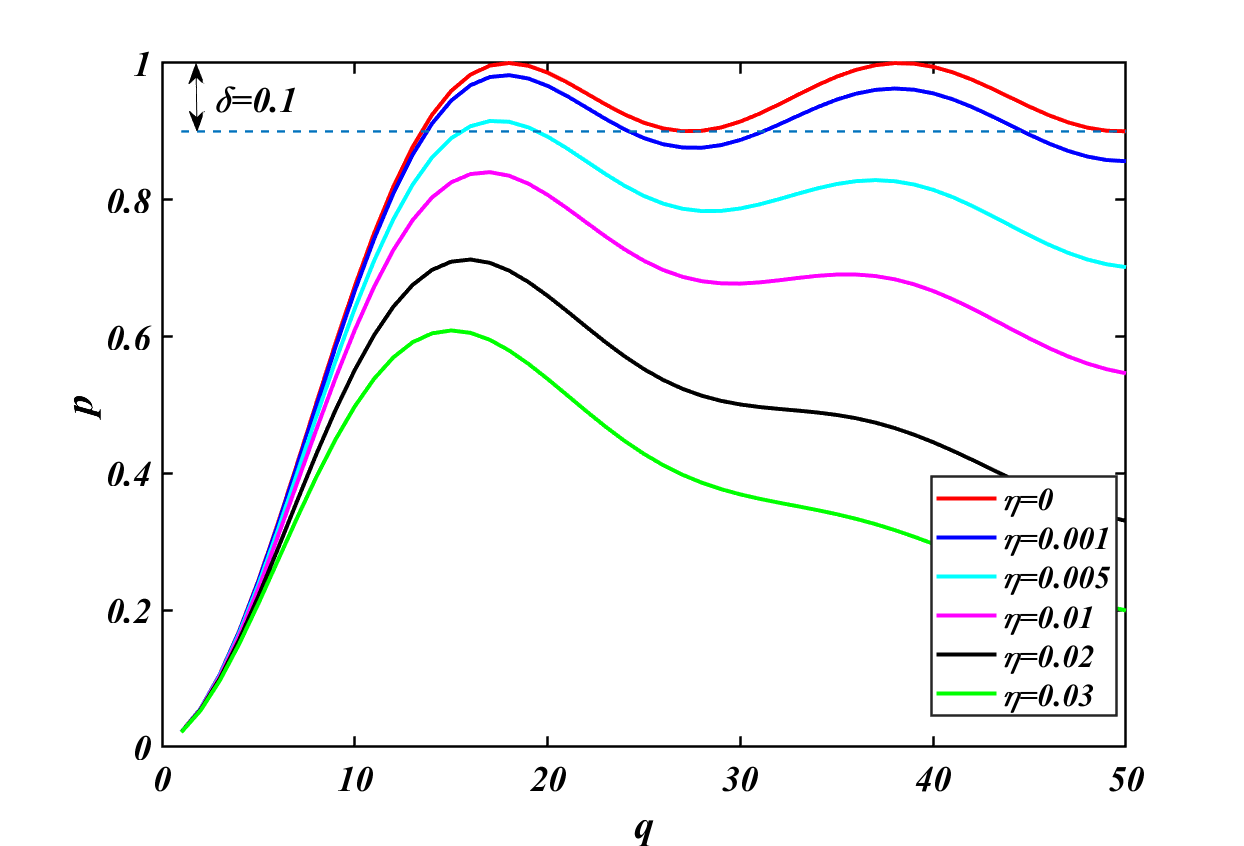}
\caption{Simulations of applying our algorithm to optimize the `alpine02' function under different magnitudes of noise intensity, where $q$ is the number of iterations in our algorithm and $p$ is the success probability. To better illustrate the results, we set an error threshold $\delta=0.1$ (dashed line). The noise effect is modeled by a quantum depolarizing channel, with the depolarizing probability $\eta$ to characterize the noise intensity, ranging from $0$ to $0.03$. For sufficient low value of $\eta$, the algorithm works reasonably well; as the noise gets larger, the systems's final state will deviate more from the target state.
}\label{fig-simulation-noise}
\end{figure}

\section{Application to spectral problems}\label{sec-eigen}

One interesting application of our quantum search algorithm is that we can apply it to solve spectral (or eigenvalue) problems for operators defined in CV quantum information. Specifically, given an analytic function $F(x,p)$, based on $\hat x$ and $\hat p$, we can define a self-adjoint operator $\hat B=F(\hat x,\hat p)$ based on functional calculus~\cite{quant_math_hall}. Typical examples of such $\hat{B}$ include the polynomial operators $\hat x^3$, $\hat p^3$, $\hat x+\hat p$, $\hat x^2+\hat p^2$, $\hat x^2-\hat p^2$, and $\hat x^4+\hat p^2$. The spectrum of $\hat B$ can be discrete, continuous, or a mixture of both. Unfortunately, except for a few examples, an analytic way of finding out the spectrum of $\hat B$ is extremely difficult. However, we can define the following eigenvalue problem: for a given self-adjoint operator $\hat{B}$, we aim to find the eigenstates of $\hat B$ whose corresponding eigenvalues fall into the interval $[a,b]$. Such eigenvalue problems can be conveniently solved by our quantum search algorithm on a CV quantum device.

In the following, we solve this eigenvalue problem for a self-adjoint operator $\hat B=\poly(\hat x,\hat p)$, where $\poly(x,p)$ is a polynomial in $x$ and $p$. In the quantum circuit of Fig.~\ref{fig-oracle}, we set $\hat{H} = \hat{B}$ and define $U_C$ as follows:
\begin{align*}
U_C=U_{[a,b]}=P_{[a,b]} \otimes e^{-i\hat p} + (I - P_{[a,b]}) \otimes I,
\end{align*}
where the projection operator $P_{[a,b]}$ is given by $P_{[a,b]}= \int_{a}^b \ket{x} \bra{x} \, dx$. For the DV-model case, the quantum phase estimation can be used to construct the quantum oracle to solve the eigenvalue problem~\cite{jin2020query}; similarly, for the continuous-spectrum case, we can construct the quantum oracle $O$ based on $e^{-i\hat{B}\otimes \hat{p}}$, which is the quantum phase estimation circuit for CV systems. For $\hat B=\poly(\hat x,\hat p)$, the unitary $e^{-i\hat{B}\otimes \hat{p}}$ can be physically constructed using the method in~\cite{lloyd1999quantum,kalajdzievski2019exact,PhysRevResearch.6.023332}. For example, if $\hat{B}=\theta_1(\hat{p}+3\theta_2\hat{x}^2)$, then we can design the following CV quantum circuit to generate $e^{-i \hat{B}\otimes\hat{p}}$:
\begin{align}
\begin{split}
e^{-i \hat{B}\otimes\hat{p}}&= e^{-i\theta_1(\hat{p}+3\theta_2\hat{x}^2)\otimes\hat{p}} \\
&= e^{-i\theta_2\hat{x}^3}e^{-i\theta_1\hat{p}\otimes\hat{p}}e^{i\theta_2\hat{x}^3}.
\end{split}
\end{align}

If $\hat{B}$ has a continuous spectrum, then the eigenvalue equation is given by $\hat{B}\ket{u_\alpha}=E_\alpha\ket{u_\alpha}$, where $\ket{u_\alpha}$ is the eigenstate corresponding to the eigenvalue $E_\alpha$. For the input state $\ket{\varphi}$, it can be expanded in terms of $\{\ket{u_\alpha}\}$, leading to $\ket{\varphi}=\int \varphi_\alpha \ket{u_\alpha} d\alpha$. The evolution under $e^{-i\hat{B}\otimes \hat{p}}$ is governed by:
\begin{align}
e^{-i \hat{B} \otimes \hat{p}} \ket{\varphi} \ket{0}_x = \int \varphi_\alpha \ket{u_\alpha} \ket{E_\alpha}_x d\alpha.
\end{align}
In Fig.~\ref{fig-oracle}, the quantum oracle $O$ becomes:
\begin{align}
O=O_{[a,b]}=(e^{i \hat{B} \otimes \hat{p}}\otimes I)(I\otimes U_{[a,b]})(e^{-i \hat{B} \otimes \hat{P}}\otimes I).
\end{align}
Then the evolution under $O$ is given by:
\begin{align}
\begin{split}
O \ket{\varphi} \ket{0}_x \ket{y} &= \int_{\alpha:E_\alpha\in [a,b]} \varphi_\alpha \ket{u_\alpha} \ket{0}_x \ket{y+1}d\alpha \\
&+ \int_{\alpha:E_\alpha\notin [a,b]} \varphi_\alpha \ket{u_\alpha} \ket{0}_x \ket{y}d\alpha.
\end{split}
\end{align}
Ignoring the second mode, the above formula shows that $O=O_{[a,b]}$ is indeed the oracle we need: $O_{[a,b]} \ket{u_\alpha} \ket{y} = \ket{u_\alpha} \ket{y + f(\ket{u_\alpha})}$, where $f$ is an indicator function used to determine whether the eigenvalue $E_\alpha$ corresponding to the eigenstate $\ket{u_\alpha}$ lies within $[a,b]$. Hence, by applying our fixed-point quantum search algorithm for CSPs, we can obtain a final state very close to $\ket t$, where
\begin{align}
\ket{t} = \frac{1}{Z}\int_{\alpha:E_\alpha\in [a,b]} \varphi_\alpha \ket{u_\alpha} d\alpha
\end{align}
is a superposition of all the required eigenstates, and $Z$ is the normalization factor.

If $\hat{B}$ has a discrete spectrum, then the eigenvalue equation reads $\hat{B}\ket{u_i}=E_i\ket{u_i}$. For the input state $\ket{\varphi}$, it can likewise be expanded according to the eigenstates, leading to $\ket{\varphi}=\sum_j \varphi_j\ket{u_j}$, and our quantum algorithm can be applied in a similar way.

\section{Conclusion and discussion}

Our algorithm for CV quantum system can be applied to a wide range of bosonic quantum platforms, including optical modes~\cite{15db,Su2013}, phononic modes in trapped-ion system~\cite{Chen2023}, mechanical oscillators~\cite{VonLupke2024}, and superconducting circuits~\cite{Cai2021}. To implement our algorithm on a CV quantum computing platform, two key challenges must be addressed: (1) the capability to implement non-Gaussian gates, such as the cubic phase gate, which are essential for universal computation, and (2) the decoherence, such as the population loss in superconducting bosonic modes. Despite these hurdles, substantial experimental progress has been achieved. High-fidelity Gaussian operations, including squeezing, beam splitting, and displacement, are now routinely implemented in modern optical laboratories. Notable achievements include generating 15 dB of optical squeezing~\cite{15db}, producing large-scale two-dimensional CV cluster states~\cite{2d-cluster}, and generating CV gate sequence consisting of a single-mode squeezing gate and a two-mode controlled-phase gate~\cite{Su2013}. While significant challenges remain, the field continues to advance rapidly, driven by innovative experimental techniques. Furthermore, emerging proposals for CV quantum algorithms are poised to provide additional momentum, accelerating the path toward practical and scalable CV quantum computing.

In summary, for CSPs, we have designed a fixed-point quantum search algorithm using the CV model that drives the system state to converge to the solution target state with a quadratic quantum speedup. We have also proved that our quantum search algorithm is optimal in achieving the lower-bound of the query complexity for an arbitrary quantum search algorithm. Compared to discrete search algorithms, our approach circumvents issues related to increased resource consumption and loss of accuracy caused by discretization. Moreover, this algorithm is well-suited for amplitude amplification of target states in the CV model. We have also provided a concrete circuit implementation of our algorithm and discussed how to design the internal structure of the quantum oracle for a given CSP. This framework allows the algorithm to be applied effectively to solve both optimization and eigenvalue problems as applications.

\begin{acknowledgements}
We gratefully acknowledge the supports from the National Natural Science Foundation of China under Grant No. 92265208, 11925404, 92165209, 92365301, and 92265210, and Innovation Program for Quantum Science and Technology under Grant No.~2021ZD0300200.
\end{acknowledgements}

\bibliography{ref}

\clearpage
\appendix

\section{$\pi/3$ search algorithm}\label{sec-pi3-search}

In Section~\ref{sec:search_design}, we have proposed a fixed-point quantum search algorithm based on continuous variables (CV) that provides a quadratic speedup $\O(1/\sqrt{\lambda})$ over classical counterparts. Here, for comparison, we develop a CV adaptation of the $\pi/3$ fixed-point quantum search algorithm, initially proposed by Grover in 2005~\cite{grover2005fixed}. In essence, it replaces the phase inversion operation with the $\pi/3$ phase rotation operator and ensures the quantum state asymptotically converges to the target state $\ket{t}$. However, the drawback of such algorithm is that it has a query complexity of $\O(1/\lambda)$, i.e., no quantum advantage.

Specifically, we construct the initial state $\ket{\varphi}$ and define the target state $\ket{t}$:
\begin{align*}
\ket\varphi &= \int_A \varphi (\bm x) \ket{\bm{x}} d\bm x,\\
\ket t &\equiv \int_{A} \varphi_1 (\bm{x}) \ket{\bm x}  d\bm{x}=\frac{1}{\sqrt{I_1}}\int_{Q} \varphi (\bm{x}) \ket{\bm x}  d\bm{x},
\end{align*}
where $I_1=\int_{Q} |\varphi (\bm{x})|^2 d\bm{x}$ and $\varphi_1 (\bm{x})=\varphi (\bm{x})/\sqrt{I_1}$ on $Q$ and zero on $A/Q$. We assume $\ket{\varphi}$ is generated from the source state $\ket{s}$ through the unitary $U$ satisfying $\ket{\varphi} = U \ket{s}$.

The Grover's iteration sequence is characterized by following two unitary transformations
\begin{align*}
R_t &= I - \left(1 - e^{\frac{\pi}{3}i}\right) \ket{t}\bra{t},\\
R_s &= I - \left(1 - e^{\frac{\pi}{3}i}\right) U^\dagger \ket{\varphi} \bra{\varphi} U.
\end{align*}
$R_s$ and $R_t$ represent the selective phase shift operations for states $\ket{s}$ and $\ket{t}$, respectively, with phase $\pi/3$, and can be shown to be unitary. To avoid ``undercooking" and ``overcooking," we use a recursive method to construct the unitary transformation $U_{m}$ to construct the fixed-point quantum search:
\begin{align}\label{eq-pi3-u}
{{U}_{m}} = {{U}_{m-1}}{{R}_{s}}U_{m-1}^{\dagger}{{R}_{t}}{{U}_{m-1}}, \qquad U_0 = U,
\end{align}
where $m$ denotes the number of recursion steps, satisfying $3^m = 2q + 1$. In this work, query complexity is defined as the number of oracle calls needed to meet a predefined accuracy for the target state. We can evaluate the performance of this $\pi/3$ fixed-point quantum search algorithm by analyzing its query complexity.

The overlap between the target state $\ket{t}$ and the initial state $\ket{\varphi}$ is defined as $\lambda \equiv |\bra{t}U\ket{s}|^2$. For convenience, we introduce a parameter $\epsilon \in [0,1)$ to describe the overlap $\lambda = 1 - \epsilon$ since $0 < \lambda \le 1$. According to Eqn. \eqref{eq-pi3-u}, we can prove by induction that:
\begin{align}\label{eq-pi3-epsilon}
{{| \left\langle  t \right|{{U}_{m}}\left| s \right\rangle  |}^{2}} = 1 - {{\varepsilon }^{{{3}^{m}}}} = 1 - {{\varepsilon }^{2q + 1}},
\end{align}
where $q$ is the number of queries and $0 \le \epsilon < 1$. From this, it follows that as the number of queries increases, the error between the final state $U_m \ket{s}$ and the target state $\ket{t}$ decreases to zero, ensuring that the quantum state asymptotically converges to $\ket{t}$. In order to satisfy $\left|\bra{t}U_m\ket{s}\right|^2 \ge 1 - \delta$, where $\delta$ is the desired error tolerance, the required number of queries can be expressed as $q = \left(\frac{\ln \delta}{\ln(1 - \lambda)} - 1\right)/2$, derived from Eqn. \eqref{eq-pi3-epsilon}. When the overlap $\lambda$ between the initial state $\ket{\varphi}$ and the target state $\ket{t}$ is small enough, the number of queries required can be rewritten as:
\begin{align}
q_{\frac{\pi}{3}} = \frac{1}{2} \left(-\frac{\ln \delta}{\lambda} - 1\right) \sim \mathcal{O}\left(\frac{1}{\lambda}\right).
\end{align}
Thus, this $\pi/3$ fixed-point search algorithm for CSPs shows no quantum advantage over classical algorithms in terms of query complexity.

\section{Lower bound for solving CSPs}\label{sec-lower-bd}

In Section~\ref{sec:search_design}, for CSPs, we propose a CV fixed-point quantum search algorithm that provides a quadratic speedup $\O(1/\sqrt{\lambda})$ over classical algorithms, where $\lambda=\lambda_0=\frac{m(Q)}{m{(A)}}$ denotes the overlap between the target solution space and the search space. Here, we establish the lower bound on query complexity of an arbitrary query-based quantum algorithm for solving CSPs.

Previous studies~\cite{bennett1997strengths, boyer1998tight, dohotaru2008exact} have rigorously proved the optimality of the Grover's search algorithm in achieving the quadratic speedup over classical algorithms. These studies show that any quantum algorithm to solve the query-based search problem must have a query complexity no smaller than $\mathcal{O}(\sqrt{\frac{N}{M}})$, where $N$ is the number of initial items and $M$ is the number of target items. Similarly, we can derive a lower bound on query complexity for any quantum search in the CV framework.

In the following, we use the state-vector norm derived from the inner product, with $\|\ket \psi  \|^2 = |\langle \psi|\psi \rangle|$. The quantum states used in this work are all normalized states, that is, any quantum state $\psi$ satisfies $\|\ket \psi  \|=1$. It is straightforward to see that this norm satisfies unitary invariance and non-negativity, i.e., $\left \| U|a\rangle \right \|=\left \| |a\rangle \right \| \ge 0$.

%[Cauchy-Schwarz inequality]
\begin{lemma} For $0\le a_i \le 1$ and $\sum_{i=1}^{n} a_i^2=1$, we have:
\begin{align}
\max \sum_{i=1}^{n} a_i = \sqrt n.
\end{align}\label{lemma_1}
\end{lemma}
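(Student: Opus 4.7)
The plan is to recognize the inequality $\sum_i a_i \le \sqrt{n}$ as a direct consequence of the Cauchy--Schwarz inequality, together with an explicit witness showing the bound is attained. Once framed this way, the proof splits cleanly into an upper bound step and a construction step, neither of which requires the assumption $a_i \le 1$ in a nontrivial manner (that constraint is automatically implied by $\sum_i a_i^2 = 1$ with $a_i \ge 0$, so it plays no real role).

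For the upper bound, I would apply Cauchy--Schwarz to the vectors $(a_1, \ldots, a_n)$ and $(1, 1, \ldots, 1)$ in $\mathbb{R}^n$:
\begin{equation*}
\sum_{i=1}^n a_i \;=\; \sum_{i=1}^n 1 \cdot a_i \;\le\; \sqrt{\sum_{i=1}^n 1^2} \; \sqrt{\sum_{i=1}^n a_i^2} \;=\; \sqrt{n}.
\end{equation*}
This immediately gives $\sum_i a_i \le \sqrt{n}$ under the constraint $\sum_i a_i^2 = 1$.

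For attainment, I would exhibit the symmetric choice $a_i = 1/\sqrt{n}$ for every $i = 1, \ldots, n$. This choice clearly satisfies $0 \le a_i \le 1$ and $\sum_i a_i^2 = n \cdot (1/n) = 1$, while producing $\sum_i a_i = n \cdot (1/\sqrt{n}) = \sqrt{n}$. Hence the supremum is achieved and equals $\sqrt{n}$. (Equivalently, one can derive this via Lagrange multipliers: extremizing $\sum_i a_i$ subject to $\sum_i a_i^2 = 1$ forces $1 = 2\mu a_i$, so all $a_i$ are equal, and normalization pins them down to $1/\sqrt{n}$.)

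There is no real obstacle in this lemma; the only minor subtlety is to confirm that the Cauchy--Schwarz equality case is compatible with the box constraint $a_i \in [0,1]$, which it is since $1/\sqrt{n} \in [0,1]$ for all $n \ge 1$. The lemma will presumably be used later to bound a sum of overlaps/amplitudes appearing in the proof of Theorem~\ref{t1}, converting a normalization identity into a linear estimate with the factor $\sqrt{n}$ that appears in the stated lower bound.
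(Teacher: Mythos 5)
Your proof is correct and follows essentially the same route as the paper, which also derives the bound $\sum_i a_i \le \sqrt{n}$ from the Cauchy--Schwarz inequality and notes that equality holds when all $a_i$ are equal. Your additional observations---the explicit witness $a_i = 1/\sqrt{n}$ and the remark that the constraint $a_i \le 1$ is redundant given $\sum_i a_i^2 = 1$---are accurate but do not change the argument.
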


\begin{proof}
Obtained from the Cauchy-Schwarz inequality,
\begin{align*}
\begin{split}
(\frac{1}{n}\sum_{i=1}^{n} a_i)^2
&= (\frac{a_1\cdot 1+a_2\cdot 1+...+a_n\cdot 1}{n})^2\\
&\le (\frac{a_1^2+a_2^2+...+a_n^2}{n^2})(1+1+...+1)\\
&= \frac{a_1^2+a_2^2+..+a_n^2}{n}=\frac{1}{n}.
\end{split}
\end{align*}
Hence,
\begin{align*}
\sum_{i=1}^{n} a_i \le \sqrt n,
\end{align*}
where equality holds if and only if all $a_i$ are equal.
\end{proof}

Without loss of generality, for a CSP with search space $A$ and solution space $Q$, an arbitrary quantum search algorithm to solve this CSP can be characterized by the following unitary sequence and the final state $\ket{\psi_Q^{q}}$:
\begin{align}\label{eqn:search_u_sequence}
\ket{\psi_Q^{q}} \equiv U_q O_{Q}(\alpha_q) U_{q-1} \cdots U_1 O_Q(\alpha_1) U_0 \ket{s},
\end{align}
where $\ket{s}$ is the source state and $U_i$ is an arbitrary unitary transformation. In the gate sequence, the oracles $O_Q(\alpha_j)$, $j=1,\ldots,q$ are determined by the target solution space $Q$ and the rotational phase parameter $\alpha_i$. Unlike the conventional quantum oracles which flip the target state by phase $\pi$, the oracle $O_Q(\alpha)$ describes flipping the target state by an arbitrary phase $\alpha$:
\begin{align}
O_Q(\alpha) \ket{x} =
\begin{cases}
e^{i\alpha} \ket{x}, & x \in Q \\
\ket{x}, & x \notin Q
\end{cases}.
\end{align}
For example, when the search algorithm is the original Grover's algorithm, $O_Q(\alpha)$ has a flip phase $\alpha=\pi$. The overlap between the initial state $U \ket{s}$ and the target state $\ket{t}$ is $\lambda = |\langle t | U | s \rangle|^2$. We partition the entire search space $A$ into subregions $Q_1, Q_2, \ldots, Q_n$ with $n = \lceil \frac{1}{\lambda} \rceil$.  For each $Q_i$ of the partition, assuming it corresponds to the solution space of some CSP, then the quantum search algorithm expressed in Eqn.~(\ref{eqn:search_u_sequence}) will generate $\ket{\psi_{Q_i}^{q}}$.

Next, based on the quantum search sequence in Eqn.~(\ref{eqn:search_u_sequence}), we define
\begin{align}
\ket{\psi^q} \equiv U_q U_{q-1} \cdots U_1 U_0 \ket{s}.
\end{align}
The difference between $\ket{\psi_Q^q}$ and $\ket{\psi^q}$ is that $\ket{\psi^q}$ does not include $O_Q(\alpha_j)$, $j=1,\ldots,q$.

In addition, we define
\begin{align*}
\ket{\psi_Q^{j,q}} \equiv U_q U_{q-1} \cdots U_{j+1}\big(U_jO_Q(\alpha_j)\big) \cdots \big(U_1 O_Q(\alpha_1)\big) U_0 \ket{s}.
\end{align*}
%which denote the quantum state after $j$ queries with oracle $O_Q$.

For each subregion $Q_i$, we can calculate the distance $d_{q,i}=\left\| \ket{\psi^q} - \ket{\psi_{Q_i}^q}  \right\|$. Then we can define the average distance $ d_q\equiv  \frac{1}{n}\sum_{i=1}^{n}d_{q,i}=\frac{1}{n}\sum_{i=1}^{n}\left\| \ket{\psi^q} - \ket{\psi_{Q_i}^q}  \right\|$.

\begin{lemma}[Upper bound of distance by $q$ queries]
The maximum of the average distance $d_q$ after $q$ queries is $\frac{2q}{\sqrt n}$.\label{lemma_2}
\end{lemma}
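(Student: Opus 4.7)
The plan is to run a standard hybrid (``path-switching'') argument in a form well suited to the CV oracle $O_Q(\alpha)$. I would introduce the \emph{oracle-free} trajectory $\ket{\phi_j}\equiv U_jU_{j-1}\cdots U_0\ket{s}$, which is independent of the target region, alongside the true evolution $\ket{\phi_j^{Q_i}}\equiv U_j O_{Q_i}(\alpha_j)\cdots U_1 O_{Q_i}(\alpha_1) U_0\ket{s}$, so that $d_{q,i}=\|\ket{\phi_q^{Q_i}}-\ket{\phi_q}\|$. I prefer this pair to the text's $\ket{\psi_Q^{j,q}}$ hybrids because it confines the entire $Q_i$-dependence of the step-$j$ increment to a projector $P_{Q_i}$ acting on a state that does not depend on $i$, which is what makes the subsequent sum over $i$ tractable via Lemma~\ref{lemma_1}.

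Next, writing
\begin{align*}
\ket{\phi_j^{Q_i}} - \ket{\phi_j} = U_j O_{Q_i}(\alpha_j)\bigl(\ket{\phi_{j-1}^{Q_i}} - \ket{\phi_{j-1}}\bigr) + U_j(O_{Q_i}(\alpha_j)-I)\ket{\phi_{j-1}}
\end{align*}
and applying the triangle inequality together with unitary invariance of the norm gives the recursion
\begin{align*}
\|\ket{\phi_j^{Q_i}} - \ket{\phi_j}\| \leq \|\ket{\phi_{j-1}^{Q_i}} - \ket{\phi_{j-1}}\| + \|(O_{Q_i}(\alpha_j)-I)\ket{\phi_{j-1}}\|.
\end{align*}
Iterating from the base case $\ket{\phi_0^{Q_i}}=\ket{\phi_0}$, together with the identity $O_Q(\alpha)-I=(e^{i\alpha}-1)P_Q$ and the trivial bound $|e^{i\alpha}-1|\leq 2$, telescopes to
\begin{align*}
d_{q,i} \leq 2\sum_{j=1}^q \|P_{Q_i}\ket{\phi_{j-1}}\|,
\end{align*}
where $P_{Q_i}=\int_{\bm{x}\in Q_i}\ket{\bm{x}}\bra{\bm{x}}\,d\bm{x}$ and, crucially, the inside state $\ket{\phi_{j-1}}$ is independent of $i$.

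Finally, I would sum over $i$. Since $\{Q_i\}_{i=1}^n$ partitions $A$, the projectors $P_{Q_i}$ are pairwise orthogonal with $\sum_i P_{Q_i}=P_A\leq I$, so for every fixed $j$,
\begin{align*}
\sum_{i=1}^n \|P_{Q_i}\ket{\phi_{j-1}}\|^2 = \|P_A\ket{\phi_{j-1}}\|^2 \leq 1.
\end{align*}
Applying Lemma~\ref{lemma_1} (extended to $\sum a_i^2\leq 1$ via the same Cauchy--Schwarz step used in its proof) to $a_i=\|P_{Q_i}\ket{\phi_{j-1}}\|$ gives $\sum_{i=1}^n\|P_{Q_i}\ket{\phi_{j-1}}\|\leq\sqrt{n}$ for each $j$, so
\begin{align*}
n\,d_q = \sum_{i=1}^n d_{q,i} \leq 2\sum_{j=1}^q \sqrt{n} = 2q\sqrt{n},
\end{align*}
which rearranges to $d_q\leq 2q/\sqrt{n}$.

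The argument is essentially the BBBV progress-measure idea transcribed to the CV setting, so there is no deep obstacle; the one subtlety---and the main place the bound could be lost---is the choice of hybrid. If one instead expands along the text's $\ket{\psi_Q^{j,q}}$, each increment produces a projector $P_{Q_i}$ acting on an intermediate state that itself still depends on $Q_i$, and the convenient partition-of-unity identity $\sum_i\|P_{Q_i}\ket{\phi_{j-1}}\|^2\leq 1$ is no longer immediate. Routing the perturbation through the oracle-free trajectory is therefore the essential step and is what cleanly couples the per-query error to Lemma~\ref{lemma_1}.
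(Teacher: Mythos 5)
Your proof is correct, and it is not merely a cosmetic variant of the paper's argument: the two differ in which trajectory absorbs the oracle perturbation, and the difference is substantive. The paper telescopes along the hybrids $\ket{\psi_{Q_i}^{j,q}}$ (oracles inserted in the first $j$ positions), so each increment reduces to $\|(O_{Q_i}(\alpha_{j+1})-I)\ket{\psi_{Q_i}^{j}}\|\le 2\|P_{Q_i}\ket{\psi_{Q_i}^{j}}\|$ with $\ket{\psi_{Q_i}^{j}}$ the \emph{oracle-dependent} intermediate state, and then invokes Lemma~\ref{lemma_1} on $a_i=\|P_{Q_i}\ket{\psi_{Q_i}^{j}}\|$. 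But Lemma~\ref{lemma_1} requires $\sum_i a_i^2\le 1$, which is guaranteed only when all $n$ orthogonal projectors act on the \emph{same} state; here each $a_i$ involves a different state, and a good search algorithm drives $\|P_{Q_i}\ket{\psi_{Q_i}^{j}}\|$ toward $1$ for every $i$, so $\sum_i a_i$ can approach $n$ rather than $\sqrt n$. Your decomposition, which pushes $(O_{Q_i}(\alpha_j)-I)=(e^{i\alpha_j}-1)P_{Q_i}$ onto the oracle-free state $\ket{\phi_{j-1}}$ (equivalently, switching the hybrids from the back of the circuit rather than the front), is the standard BBBV/Nielsen--Chuang routing: it makes $\sum_i\|P_{Q_i}\ket{\phi_{j-1}}\|^2=\|P_A\ket{\phi_{j-1}}\|^2\le 1$ immediate from the orthogonality of the partition projectors, after which Cauchy--Schwarz legitimately yields the factor $\sqrt n$ per query. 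So your version proves the stated bound cleanly, whereas the paper's final inequality as written has a gap that your choice of hybrid is precisely what repairs; the lemma's conclusion is nevertheless true. Two minor points you handled correctly: Lemma~\ref{lemma_1} is stated with $\sum_i a_i^2=1$, so the extension to $\sum_i a_i^2\le 1$ does need the remark you make, and your bound $|e^{i\alpha}-1|\le 2$ justifies the factor of $2$ for arbitrary phases, which the paper only asserts for $\alpha=\pi$.
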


\begin{proof}
We use the triangle inequality, unitary invariance and Lemma~\ref{lemma_1} to get,
\begin{align}
\begin{split}
d_q=&\frac{1}{n}\sum_{i=1}^{n}\left\| \ket{\psi^q} - \ket{\psi_{Q_i}^q}  \right\|\\
\le& \frac{1}{n}\sum_{i=1}^{n} \sum_{j=0}^{q-1} \left\| \ket{\psi_{Q_i}^{j+1,q}} - \ket{\psi_{Q_i}^{j,q}} \right \| \\
=& \frac{1}{n}\sum_{j=0}^{q-1} \sum_{i=1}^{n} \left\| O_{Q_i}(\alpha_{j+1}) \ket{\psi_{Q_i}^j}- \ket{\psi_{Q_i}^j} \right \|\\
\le& \frac{2}{n}\sum_{j=0}^{q-1} \sum_{i=1}^{n} \left\| P_{Q_i} \ket{\psi_{Q_i}^j} \right \| \\
\le& 2\sum_{j=0}^{q-1} \frac{1}{\sqrt n} = \frac{2q}{\sqrt n},
\end{split}
\end{align}
where the second inequality holds for all $\alpha_{j+1} = \pi$ and the last inequality uses the Lemma~\ref{lemma_1}.
\end{proof}

%According to Lemma 2, we derive the inequality related to the number $q$ of queries. Next, we are ready to present the lower bound of the distance after $q$ queries in Lemma 3. In this way, the lower bound of the number of queries $q$ in the quantum search algorithm can be obtained.

\begin{lemma}[Lower bound of distance by $q$ queries]
We assume that the quantum state $\ket{\psi_Q^q}$ after $q$ queries is projected onto the target state $\ket{t}$ with a probability of at least $p$. Then the minimum of the average distance $d_q$ after $q$ queries is $\frac{1}{\sqrt{2}} \left(1 + \sqrt{p} - \sqrt{1-p} - \frac{2}{\sqrt{n}}\right)$.\label{lemma_3}
\end{lemma}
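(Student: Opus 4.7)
The plan is to lower bound each term $\|\ket{\psi^q}-\ket{\psi_{Q_i}^q}\|$ by decomposing both states along $\ket{t_i}$ and its orthogonal complement, then averaging over $i$. The key enabling fact is that $\{\ket{t_i}\}_{i=1}^{n}$ is orthonormal: the partition $A=\bigsqcup_{i=1}^{n}Q_i$ is disjoint, so the target states have disjoint support. This orthonormality, combined with Lemma~\ref{lemma_1} (a Cauchy-Schwarz argument), limits how well the query-free state $\ket{\psi^q}$ can overlap all $n$ of the $\ket{t_i}$ on average, while the hypothesis $|\langle t_i|\psi_{Q_i}^q\rangle|^2\ge p$ forces a strong overlap for each $\ket{\psi_{Q_i}^q}$ with its own target.

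Setting $\alpha_i\equiv\langle t_i|\psi^q\rangle$ and $\beta_i\equiv\langle t_i|\psi_{Q_i}^q\rangle$, so $|\beta_i|\ge\sqrt{p}$, I would write
\begin{align*}
\|\ket{\psi^q}-\ket{\psi_{Q_i}^q}\|^2=|\alpha_i-\beta_i|^2+\bigl\|(I-\ket{t_i}\bra{t_i})(\ket{\psi^q}-\ket{\psi_{Q_i}^q})\bigr\|^2
\end{align*}
and invoke the elementary inequality $\sqrt{a^2+b^2}\ge(a+b)/\sqrt{2}$ for $a,b\ge 0$, which supplies the $1/\sqrt{2}$ prefactor appearing in the target bound. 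The reverse triangle inequality then gives $|\alpha_i-\beta_i|\ge|\beta_i|-|\alpha_i|\ge\sqrt{p}-|\alpha_i|$ and
\begin{align*}
\|(I-\ket{t_i}\bra{t_i})(\ket{\psi^q}-\ket{\psi_{Q_i}^q})\|\ge\sqrt{1-|\alpha_i|^2}-\sqrt{1-|\beta_i|^2}\ge\sqrt{1-|\alpha_i|^2}-\sqrt{1-p};
\end{align*}
both remain trivially valid when their right-hand sides turn negative. Combining these yields
\begin{align*}
\|\ket{\psi^q}-\ket{\psi_{Q_i}^q}\|\ge\frac{1}{\sqrt{2}}\bigl(\sqrt{p}-|\alpha_i|+\sqrt{1-|\alpha_i|^2}-\sqrt{1-p}\bigr).
\end{align*}

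Averaging over $i$ and applying $\sqrt{1-x^2}\ge 1-x$ on $[0,1]$ linearizes the expression to
\begin{align*}
d_q\ge\frac{1}{\sqrt{2}}\Bigl(\sqrt{p}-\sqrt{1-p}+1-\frac{2}{n}\sum_{i=1}^{n}|\alpha_i|\Bigr).
\end{align*}
Orthonormality of $\{\ket{t_i}\}$ yields $\sum_i|\alpha_i|^2\le 1$ by Bessel's inequality, so Lemma~\ref{lemma_1} (with one Cauchy-Schwarz step) gives $\tfrac{1}{n}\sum_i|\alpha_i|\le 1/\sqrt{n}$, producing the claimed $\tfrac{1}{\sqrt{2}}(1+\sqrt{p}-\sqrt{1-p}-2/\sqrt{n})$.

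The main obstacle I anticipate is the regime where $|\alpha_i|$ is not small compared to $\sqrt{p}$: both reverse-triangle bounds naively become negative, yet they remain true by non-negativity of the norm, and the subsequent linear relaxation $\sqrt{1-|\alpha_i|^2}-|\alpha_i|\ge 1-2|\alpha_i|$ is precisely what lets Lemma~\ref{lemma_1} close the argument without any case split. A secondary point is to verify orthonormality of the $\ket{t_i}$ from disjointness of the partition $Q_1,\dots,Q_n$, which is routine but essential to control $\sum_i|\alpha_i|^2$.
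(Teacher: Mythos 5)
Your proof is correct and follows essentially the same hybrid-argument route as the paper's: an orthogonal decomposition of $\ket{\psi^q}-\ket{\psi_{Q_i}^q}$, the inequality $\sqrt{a^2+b^2}\ge (a+b)/\sqrt{2}$, reverse triangle inequalities, the linearization $\sqrt{1-x^2}\ge 1-x$, and Lemma~\ref{lemma_1} to bound the average overlap by $1/\sqrt{n}$. The only difference is that you decompose along the rank-one projector $\ket{t_i}\bra{t_i}$ (controlling $\sum_i|\alpha_i|^2\le 1$ via Bessel's inequality from disjointness of the $Q_i$) whereas the paper uses the region projector $P_{Q_i}$ (with $\sum_i\|P_{Q_i}\ket{\psi^q}\|^2=1$); this is immaterial to the argument and, if anything, matches the lemma's stated hypothesis of overlap with the target state $\ket{t}$ more literally.
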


\begin{proof}
We use the triangle inequality, the possibility at least $p$, and basic inequality to get,
\begin{align}
\begin{split}
d_{q,i}=&\left \| \ket{\psi^q}-\ket{\psi_{Q_i}^q} \right \|\\
\ge &\left \| P_{Q_i}(\ket{\psi_{Q_i}^q} - \ket{\psi^q}) \right \| - \left \| P_{Q_i}^\perp(\ket{\psi_{Q_i}^q} - \ket{\psi^q}) \right \|\\
\ge &\frac{1}{\sqrt2}(\left \| P_{Q_i}\ket{\psi_{Q_i}^q} \right \| - \left \| P_{Q_i}\ket{\psi^q} \right \|\\ &+\left \| P_{Q_i}^\perp\ket{\psi^q} \right \| - \left \| P_{Q_i}^\perp\ket{\psi_{Q_i}^q} \right \|)\\
\ge &\frac{1}{\sqrt2}(\sqrt p - \sqrt{1-p}+1-2\left \| P_{Q_i}\ket{\psi^q} \right \|),
\end{split}
\end{align}
where $P_{Q_i}^\perp$ is the projection operator of the complement of $Q_i$.

On the basis of the above inequality, the average distance after $q$ queries is,
\begin{align}
\begin{split}
d_q=&\frac{1}{n}\sum_{i=1}^{n}\left \| \ket{\psi^q}-\ket{\psi_{Q_i}^q} \right \|\\
\ge & \frac{1}{n}\sum_{i=1}^{n}\frac{1}{\sqrt2}(\sqrt p - \sqrt{1-p}+1-2\left \| P_{Q_i}\ket{\psi^q} \right \|)\\
\ge & \frac{1}{\sqrt2}(1+\sqrt{p}-\sqrt{1-p}-\frac{2}{\sqrt n}),
\end{split}
\end{align}
where the last inequality can be obtained by Lemma~\ref{lemma_1}.
\end{proof}

Combining Lemma~\ref{lemma_2} and Lemma~\ref{lemma_3}, we can derive the lower bound of $q$:
\begin{align}
q \ge \frac{1}{2\sqrt{2}} \left[(\sqrt{p} - \sqrt{1-p} + 1) \sqrt{n} - 2\right].
\end{align}
Since this lower-bound result holds for arbitrary quantum search sequence in Eqn.~(\ref{eqn:search_u_sequence}), the optimal query complexity among all query-based quantum search algorithms is $\mathcal{O}(1/\sqrt{\lambda})$. Thus, the result in Theorem~\ref{t1} is proved. This implies that our quantum search algorithm for CSPs achieves the optimal query complexity.

\end{document}